\newcolumntype{P}[1]{>{\raggedright\arraybackslash}p{#1}}
\newcolumntype{C}[1]{>{\centering\arraybackslash}p{#1}}
\def \polylog{\operatorname{polylog}}
\theoremstyle{plain}
\newtheorem{theorem}{Theorem}[section]
\newtheorem{lemma}[theorem]{Lemma}
\theoremstyle{definition}
\newtheorem{definition}[theorem]{Definition}
\newtheorem{remark}[theorem]{Remark}
\newtheorem*{theorem*}{Theorem}
\newtheorem*{definition*}{Definition}
\newtheorem*{prfthm*}{Proof of Theorem}
\newcommand{\etal}{\textit{et al. }}
\newcommand{\remove}[1]{}
\title{Improved Linear-Time Construction of Minimal Dominating Set via Mobile Agents}
\author{
 Prabhat Kumar Chand \\
  Indian Statistical Institute\\
  Kolkata, India \\
  \texttt{pchand744@gmail.com}
  \And
 Anisur Rahaman Molla\thanks{A.~R.~Molla is supported, in part, by ANRF-SERB Core Research Grant, file no.~CRG/2023/009048, and R.~C.~Bose Centre's internal research grant.} \\
  Indian Statistical Institute\\
  Kolkata, India \\
  \texttt{molla@isical.ac.in}
}
\date{}
\begin{document}

\maketitle

\begin{abstract}
Mobile agents have emerged as a powerful framework for solving fundamental graph problems in distributed settings in recent times. These agents, modelled as autonomous physical or software entities, possess local computation power, finite memory and have the ability to traverse a graph, offering efficient solutions to a range of classical problems. In this work, we focus on the problem of computing a \emph{minimal dominating set} (mDS) in anonymous graphs using mobile agents. Building on the recently proposed optimal dispersion algorithm~\cite{optimal_disc_sync} on the synchronous mobile agent model, we design two new algorithms that achieve a \emph{linear-time} solution for this problem in the synchronous setting. Specifically, given a connected $n$-node graph with $n$ agents initially placed in either rooted or arbitrary configurations, we show that an mDS can be computed in $O(n)$ rounds using only $O(\log n)$ bits of memory per agent, without using any prior knowledge of any global parameters. This improves upon the best-known complexity results in the literature over the same model. In addition, as natural by-products of our methodology, our algorithms also construct a spanning tree and elect a unique leader in $O(n)$ rounds, which are also important results of independent interest in the mobile-agent framework.

\keywords{Mobile Agents \and Minimal Dominating Set\and Autonomous Agents \and Spanning Tree \and Leader Election \and Distributed Graph Algorithms}
\end{abstract}

\section{Introduction}

The use of autonomous agents to solve graph problems has recently attracted significant attention. Such agents, representing entities like self-driving cars, drones, robots, or distributed processes, combine two defining capabilities: they can perform local computations under strict memory constraints, and they can traverse networks, moving between nodes while retaining only limited information. A crucial observation in this model is that local computation cost is essentially negligible compared to movement, as in real-world scenarios where the cost of physical traversal (for example, a self-driven car traversing across mutiple cities) far outweighs local processing. Consequently, research in this area has focused on minimising movement while still enabling efficient solutions to classical graph problems.  

Several fundamental graph problems, such as computing minimal dominating sets and independent sets, leader election, spanning tree construction, and community detection, have been extensively studied both in the classical distributed model and, more recently, in the mobile-agent model. For instance, dominating set construction has been investigated in the mobile-agent setting~\cite{run_for_cover_prabhat} and refined in subsequent works~\cite{disc_mst,near_linear_leader,manish_icdcit}, while the closely related maximal independent set (MIS) problem has also been explored~\cite{mis}. The same framework has produced algorithms for spanning structures, including BFS trees~\cite{agent_bfs,tree_agent_prabhat}, MSTs~\cite{disc_mst,manish_icdcit}, and general spanning trees~\cite{butterfly_spaa_prabhat}. These developments have further led to increasingly efficient approaches for leader election. Notably, in the mobile-agent perspective, these problems add additional significance: spanning trees construction enables effective communication between these autonomous entities and helps in global information dissemination, while dominating sets highlight structurally critical nodes, allowing non-essential agents to halt—thereby reducing deployment and movement costs, a highly desirable property in practical, resource-constrained applications. A possible research possibility in this direction has been highlighted in Section~\ref{conclusion}.

Recently,~\cite{optimal_disc_sync} introduced an \emph{optimal} algorithm for the \emph{dispersion problem}, where agents are repositioned so that each node hosts at most one agent. Building on this, we design a new algorithm to compute a \emph{minimal dominating set} of a graph~$G$. In our setting, with $n$ agents on an $n$-node graph, we show that it is possible to achieve multiple objectives simultaneously: deriving a termination condition post dispersion, constructing a spanning tree, electing a leader, and computing an mDS, all within \emph{linear time}. This work improves upon the best-known complexity bounds in the literature and provides the first unified linear-time approach to these fundamental tasks in the mobile-agent model.





\subsection{Contributions}
The first study of the mDS problem in the mobile-agent framework appeared in~\cite{run_for_cover_prabhat}. Subsequent works~\cite{disc_mst,near_linear_leader} did not directly improve its techniques; progress arose mainly from advances in \emph{leader election}, which in turn enabled faster mDS computation or reduced prior knowledge requirements. In~\cite{manish_icdcit}, the authors proposed a logarithmic-time probing method to improve the search before colouring a node, but their approach required knowledge of $n$ and $\Delta$, as well as leader election before mDS construction.

In this paper, we focus on improving the complexity through a new approach. Our algorithms build on the optimal dispersion procedure of~\cite{optimal_disc_sync}, but the construction of an mDS is neither immediate nor straightforward. We develop a colouring-based mechanism and a careful strategy for releasing colours during execution. For arbitrary initial configurations, we further design a novel technique to detect the completion of dispersion before reducing the problem to the rooted case. In contrast, previous works rely on an infinite (or sufficiently large) waiting time to allow dispersion without explicit termination detection (eg, \cite{run_for_cover_prabhat}) or perform expensive computation to first elect a leader (eg, ~\cite{manish_icdcit}) before the actual mDS construction. Moreover, in the arbitrary dispersion algorithm of~\cite{optimal_disc_sync}, termination detection remains challenging since the number of agents $k$ may be significantly smaller than $n$.  Our main contribution is the following result.

\begin{theorem*}
Let $G$ be a simple, connected, anonymous graph with $n$ nodes. Given $n$ mobile agents initially placed over the nodes either in a rooted or an arbitrary configuration, there exists an algorithm through which the agents can collectively compute a minimal dominating set in $O(n)$ rounds, using $O(\log n)$ bits of memory per agent, where $\Delta$ is the maximum degree of the graph.
\end{theorem*}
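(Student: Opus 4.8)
The plan is to handle the arbitrary-configuration case by first detecting that dispersion has completed and then falling back on the rooted case, so the core of the argument is the rooted case built on top of~\cite{optimal_disc_sync}. First I would invoke the optimal dispersion routine of~\cite{optimal_disc_sync} from the common origin: with exactly $n$ agents on $n$ nodes it terminates in $O(n)$ rounds using $O(\log n)$ bits per agent and leaves one agent per node. The part I would have to extract carefully from its internal traversal is a rooted spanning tree $T$ in which every settled agent records its parent port and the set of child ports, together with the origin agent acting as leader. A single Euler tour of $T$ driven by the leader's token then assigns each node its DFS number in $\{1,\dots,n\}$ in $O(n)$ further rounds and $O(\log n)$ memory; note $\Delta\le n-1$, so port information also fits. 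This already yields the spanning-tree and leader-election statements as by-products.

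For the minimal dominating set I would run a three-colour process on top of $T$: colours \emph{white} (undominated), \emph{gray} (dominated, not selected) and \emph{black} (selected), all nodes white initially, with the traversal token visiting nodes in DFS order. When the token reaches a white node $v$, recolour $v$ black and \emph{release} a domination signal that turns every still-white graph-neighbour of $v$ gray; when it reaches a non-white node, do nothing. Correctness is the easy direction: every node ends gray or black, and a gray node had a black neighbour, so the black set $D$ dominates $G$; and when $v$ is made black no neighbour of $v$ is black (else $v$ would already be gray), while immediately afterwards every neighbour of $v$ is gray and stays gray — a gray node is never reprocessed — so $v$ retains itself as a private neighbour and $D\setminus\{v\}$ fails to dominate $v$, hence $D$ is minimal. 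Here I would use that the tree produced by a (DFS-style) dispersion has only back edges, so every neighbour of $v$ is an ancestor, the parent, a child, or a descendant of $v$, and the ancestors are already coloured when $v$ is processed; thus the release effectively only has to reach the uncoloured part of $v$'s subtree.

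The hard part will be bounding the colouring phase by $O(n)$ rounds rather than $O(m)$: a black node can have degree $\Theta(n)$, there can be $\Theta(n)$ black nodes, and — more subtly — certifying that a white-looking node genuinely has no black neighbour before colouring it black appears to demand a full neighbourhood scan, which across all nodes is $\Theta(n^2)$ in a graph like $K_{n/2,n/2}$. This is exactly where the ``careful strategy for releasing colours'' enters: instead of letting a freshly-black $v$ broadcast to its whole neighbourhood, I would push gray labels lazily \emph{along the dispersion traversal itself}, so that a node is informed that it is dominated precisely when the token crosses the relevant tree or back edge, and each agent's active work is charged to incident edges the traversal actually uses together with the (unique) recolouring of the endpoint. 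I would then formalise this with a potential function that charges every unit of agent movement in the colouring phase either to one of the $n-1$ tree edges of $T$ or to a node's single transition out of white, giving the $O(n)$ bound; getting this accounting tight, and reconciling it with the precise interface exposed by the routine of~\cite{optimal_disc_sync}, is the main technical obstacle, and is what distinguishes the approach from the wait-based scheme of~\cite{run_for_cover_prabhat} and the leader-first, $n$- and $\Delta$-aware scheme of~\cite{manish_icdcit}.

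Finally, for an arbitrary initial configuration the remaining issue is that the dispersion of~\cite{optimal_disc_sync} does not announce its own completion; but since here $k=n$, a clean local certificate is available, namely ``the current node holds exactly one agent and every incident port leads to an occupied node,'' which by connectivity can hold simultaneously at all nodes only once the configuration is fully dispersed. The plan is to interleave the dispersion with verification sweeps launched at geometrically increasing times by a tentative leader, argue that a sweep succeeds only after genuine dispersion, that the sweeps add only $O(n)$ rounds in aggregate, and that competing tentative leaders are resolved within the same budget; I expect designing the sweep so that it itself stays within $O(n)$ rounds (using the partial tree maintained by the dispersion rather than a full graph traversal) to be the delicate point. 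Once dispersion is certified we are in the rooted situation with the certifying agent's node as root, and the rooted algorithm above finishes the job, for an overall bound of $O(n)$ rounds and $O(\log n)$ bits per agent.
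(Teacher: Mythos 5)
Your correctness argument for the three-colour scheme is fine (it is essentially the paper's red/blue rule: select a node iff no neighbour is already selected, which yields both domination and a private neighbour for minimality), and you correctly identify the decisive obstacle: bounding the neighbourhood inspection by $O(n)$ rounds rather than $O(m)$. But the resolution you sketch does not close the gap, and it is not how the paper closes it. Your charging scheme counts only moves that cause a node's unique transition out of white, yet the agent performing the ``release'' at a freshly black node $v$ cannot know which incident ports lead to still-white nodes without visiting them; the visits that land on already-gray neighbours produce no transition and so are charged to nothing. In $K_{n/2,n/2}$ your own DFS order blackens every left-side vertex, and each of them must scan $n/2$ right-side neighbours (almost all already gray) either to gray them or to certify it need not, giving $\Theta(n^2)$ moves. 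Worse, because you fully disperse \emph{before} colouring, the token is carried by essentially one agent, so these moves are sequential and the round count is $\Theta(m)$, exactly the bound you set out to beat. The ``white node is its own certificate of having no black neighbour'' idea is clean, but it only relocates the cost from the decision step to the release step; it does not remove it.

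The paper's route is structurally different and hinges on \emph{not} dispersing fully before colouring. It stops the dispersion of~\cite{optimal_disc_sync} at the ``covered configuration,'' where only $\lfloor 2n/3\rfloor$ agents have settled and $\lceil n/3\rceil$ seeker agents remain co-located; this seeker team travels with the DFS token and probes \emph{all} neighbours of the current node in parallel, so each node's neighbourhood check costs $O(1)$ rounds (plus a constant wait for oscillating agents) even though the total agent movement is still $\Theta(m)$. The price is that $\lceil n/3\rceil$ nodes are vacant during colouring, which the paper handles by having settled agents oscillate over nearby empty nodes and carry a per-node colour tuple, with a final pass in which the seekers settle and make colours permanent. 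Your plan discards exactly the resource (the co-located seeker team) that makes the $O(1)$-per-node probe possible. The arbitrary-configuration stage also diverges: the paper detects termination by having each finished DFS push one extra exploration across an external edge and declaring completion when backtracking to the root finds none, plus an ID-bit-scanning symmetry breaker for the fully dispersed start, rather than your geometrically spaced verification sweeps (which, as described, would themselves need to check all ports of all nodes and so cost $\Omega(m)$ per sweep). As written, your proposal leaves the central $O(n)$-round claim unproved.
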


As natural by-products, our algorithms also achieve spanning tree construction, gathering and leader election within the same time and memory bounds, problems that are of independent interest in distributed agent-based computing. A comparison with prior results is summarised in Table~\ref{tbl:comparison}.


\begin{table}[!t]
\centering
\footnotesize
\begin{tabular}{|C{3.4cm}|C{1.75cm}|C{3.25cm}|C{3.25cm}|C{2.0cm}|}
\hline
{\bf Algorithm} & {\bf Knowledge} & {\bf Time} & {\bf Memory/Agent} & {\bf Initial Config.}\\
\hline
\rowcolor{gray!20}
\multicolumn{5}{|c|}{\textbf{Minimal Dominating Set}}\\
\hline
\textbf{Section~\ref{rooted}} & $-$ & $O(n)$ & $O(\log n)$ & Rooted\\
\textbf{Section~\ref{arbitrary}} & $-$ & $O(n)$ & $O(\log n)$ & Arbitrary\\
Kshemkalyani {\it et al.}~\cite{manish_icdcit} & $n,\Delta$ & $O(n\log\Delta)$ & $O(\log n)$  & Arbitrary\\
Kshemkalyani {\it et al.}~\cite{near_linear_leader} & $-$ & $O(n\log^2n+m)$ & $O(\log n)$  & Arbitrary \\
Kshemkalyani {\it et al.}~\cite{disc_mst} & $-$ & $O(m)$ & $O(n\log n)$ & Arbitrary \\
Chand {\it et al.}~\cite{run_for_cover_prabhat} & $\lambda,\Delta,n,m,\ell$ & $O(m+\ell\Delta\log\lambda+n\ell)$ & $O(\log n)$  & Arbitrary \\
Chand {\it et al.}~\cite{run_for_cover_prabhat} & $-$ & $O(m)$  & $O(\log n)$ & Rooted \\
\hline
\rowcolor{gray!20}
\multicolumn{5}{|c|}{\textbf{Other Results (From Section~\ref{arbitrary})}}\\
\hline
\textbf{Leader Election} & $-$ & $O(n)$ & $O(\log n)$ & Arbitrary\\
\textbf{Gathering ($n$ agents)} & $-$ & $O(n)$ & $O(\log n)$ & Arbitrary\\
\textbf{Spanning Tree} & $-$ & $O(n)$ & $O(\log n)$ & Arbitrary\\
\hline
\end{tabular}

\caption{Comparison of prior and recent results for the minimal dominating set problem.}
\label{tbl:comparison}
\end{table}

\subsection{Related Work}
The first efficient distributed implementation of the dominating set problem in the CONGEST model was studied by Jia \etal~\cite{JRS02}, who refined the greedy strategy of~\cite{LH00} to design a randomized algorithm running in $O(\log n \log \Delta)$ rounds, producing a $\ln(\Delta)$-approximation with only a constant number of messages exchanged per edge. Sultanik \etal~\cite{SSR10} addressed the art gallery problem—equivalent to finding a minimal dominating set in visibility graphs—via a distributed algorithm that runs in time proportional to the graph’s diameter and guarantees a constant-factor approximation with high probability. Kuhn and Wattenhofer~\cite{KW03} proposed LP-based algorithms that compute a dominating set within a factor $(k\Delta^{2/k}\log\Delta)$ of optimal in $O(k^2)$ rounds, with $O(k^2\Delta)$ messages per node; setting $k$ constant yields the first constant-round, non-trivial approximation. Complementing this, Kuhn \etal~\cite{KTW04} established lower bounds showing that even polylogarithmic approximations for dominating set or vertex cover require at least $\Omega\left(\sqrt{\frac{\log (n)}{\log (\log (n))}}\right)$ and $\Omega\left(\sqrt{\frac{\log (\Delta)}{\log (\log (\Delta))}}\right)$ rounds, respectively. More recently, Jiang \etal~\cite{JKY19} developed deterministic algorithms achieving an approximation factor of $(1+\epsilon)(1+\log(\Delta+1))$ in $O(2^{O(\sqrt{\log n \log \log n})})$ and $O(\Delta\polylog(\Delta)+\polylog(\Delta)\log^\star n)$ rounds for $\epsilon > 1/\polylog(\Delta)$, and extended their methods to connected dominating sets.

In the context of mobile agents, Kaur \etal~\cite{d2d_tanvir} introduced a related problem, called \emph{Distance-2-Dispersion} (\textsc{D-2-D}) problem, where $k$ agents settle on nodes subject to two constraints: no two agents may occupy adjacent nodes, and an agent may reuse a node only if no unoccupied node remains that satisfies the first condition. They showed that with $O(\log \Delta)$ memory per agent, the problem can be solved in $O(m\Delta)$ rounds without prior knowledge of $m,n,$ or $\Delta$, and when $k \geq n$, the settled agents form a maximal independent set. The problem of constructing a minimal dominating set (mDS) with mobile agents was first studied by Chand \etal\cite{run_for_cover_prabhat}, who showed that from a rooted configuration, an mDS can be identified in $O(m)$ rounds, while for arbitrary configurations the construction requires $O(\ell \Delta \log \lambda + n\ell + m)$ rounds, assuming prior knowledge of $m,n,\Delta,\lambda$ and the number of clusters $\ell$. They also obtained an $\ln(\Delta)$-approximate minimum dominating set from dispersed configurations. Subsequent works~\cite{disc_mst,near_linear_leader,manish_icdcit} improved these results by removing global knowledge requirements or optimising the trade-offs between time and memory, although the central focus of these works was on the leader election problem.

\subsection{Our Model}\label{sec:model}

\textbf{Graph: }We have an underlying graph $G(V,E)$ that is connected, undirected, unweighted and anonymous with $|V| = n$ nodes and $|E| = m$ edges. Nodes of $G$ do not have any distinguishing identifiers or labels. These nodes do not possess any memory and hence cannot store any information. The degree of a node $v\in V$ is denoted by $\delta(v)$ and the maximum degree of $G$ is $\Delta$. Edges incident on $v$ are locally labelled using port numbers in the range $[0,\delta(v)-1]$. The edges of the graph serve as \emph{routes} through which the agents can commute. Any number of agents can travel through an edge at any given time. \\ 

\noindent\textbf{Mobile Agents: }A collection of $n$ agents enumerated as $\mathcal{R} = \{r_1,r_2,\dots,r_n\}$ resides on the nodes of the graph with each having a unique ID $\in$ $[0,n^{O(1)}]$. We assume that the highest ID among the $n$ agents is denoted by $\lambda$ with ($\lambda\leq n^{O(1)}$). An agent retains and updates its memory as needed. Two or more agents can be present (\emph{co-located}) at a node or pass through an edge in $G$. However, an agent is not allowed to stay on an edge. An agent can recognise the port number through which it has entered and exited a node. The agents do not have any visibility beyond their (current) location at a node. An agent at a node $v$ can only realise its adjacent ports (connecting to edges) at $v$. Only the collocated agents at a node can sense each other and exchange information. An agent can exchange all the information stored in its memory instantaneously. For colouring, each agent maintains a variable indicating its colour, chosen from $\{\text{\textcolor{red}{red}, \textcolor{blue}{blue}}\}$. \\

\noindent \textbf{Communication Model: }We consider a synchronous system where the agents are synchronised to a common clock and the {\em local communication} model, where only co-located agents (i.e., agents at the same node) can communicate among themselves. In each round, an agent $r_i$ performs the $Communicate-Compute-Move$ $(CCM)$ task-cycle as follows: (i) {\em Communicate:} $r_i$ may communicate with other agents at the same node, (ii) {\em Compute:} Based on the gathered information and subsequent computations, $r_i$ may perform all manner of computations within the bounds of its memory, and (iii) {\em Move:} $r_i$ may move to a neighbouring node using the computed exit port. We measure the complexity in two metrics, namely, time/round and memory. The {\em time complexity} of an algorithm is the number of rounds required to execute the algorithm. The {\em memory complexity} is measured w.r.t. the amount of memory (in bits) required by each agent for computation.

\subsubsection{Problem Statement:}
Let $G(V,E)$ be a simple, connected, anonymous graph with $|V|=n$. Suppose $n$ autonomous agents are initially distributed arbitrarily over the nodes of $G$. The goal is to design an algorithm that repositions these $n$ agents across the nodes of $G$ so as to compute a minimal dominating set of the graph, while minimising both the overall time complexity and the memory required at each agent.



\section{Preliminaries}\label{prelims}

\subsection{A Linear-Time Graph Covering and Dispersion Algorithm}
\label{sec:prelim_optima_disp}
In~\cite{optimal_disc_sync}, the authors solve the dispersion problem in $O(k)$ rounds, where $k\leq n$ agents need to reposition themselves into distinct nodes such that no node hosts more than one agent. Here, we provide a brief description of the algorithm for our model (where $k=n$) for both the rooted and arbitrary configurations. First, we describe the algorithm for the rooted configuration. The algorithm employs a Depth-First Search (DFS) strategy to explore the graph and incrementally settle agents. A key contribution of the work is in addressing the classical bottleneck of DFS-based dispersion: the time spent in searching for an unoccupied neighbour to continue traversal. In earlier approaches, this search was sequential and incurred $O(\Delta)$ time per step, bringing the total edge count $m$ into the time complexity~\cite{KS21}.

Sudo \textit{et al.}~\cite{sudo_near_linear_dispersion} improved this by proposing a parallel probing technique that reduced the neighbour search time to $O(\log \Delta)$ rounds. Their method escalated the search by recursively bringing in agents from settled neighbourhoods in a doubling fashion. Building on this idea, the algorithm in~\cite{optimal_disc_sync} further reduces the probing cost to $O(1)$ rounds. The main insight is to proactively reserve at least $\lceil n/3 \rceil$ agents (called \emph{seeker agents}) for synchronous probing, which allows all neighbours of a node to be probed in parallel. The remaining $\lfloor 2n/3 \rfloor$ agents (called \emph{explorer agents}) are allowed to settle during the DFS.

Maintaining the availability of $\lceil n/3 \rceil$ seekers requires that at least $\lceil n/3 \rceil$ nodes remain unoccupied until DFS finishes. The algorithm ensures this by deliberately leaving certain nodes in the DFS tree empty (we term them as \emph{covered nodes}). A covered node is one that is unoccupied but still accessible—meaning it is "covered" by an agent that can visit it when needed. In contrast, a \emph{fully unsettled node} is both unoccupied and uncovered. The algorithm guarantees that covered nodes are reachable through agent \emph{oscillations}, where a settled agent temporarily moves from its home node to one or more nearby empty nodes and back. In particular, an agent may oscillate between up to three child nodes or between two sibling nodes. The oscillation schedule ensures that if any agent waits at a covered node for six rounds, it is guaranteed to encounter the corresponding oscillating agent. Thus, each settled agent can cover $O(1)$ empty nodes with constant-time oscillation.

Special care is taken around branching points in the DFS tree to decide which nodes should remain empty, which agents should oscillate, and which nodes are permanently settled. This decision-making occurs during the forward and backtrack phases of DFS. The algorithm ensures that every node in the DFS tree is either directly settled or properly covered via oscillation. At each step of the traversal, the agents use the $\lceil n/3 \rceil$ seekers to perform a probing step: if a fully unsettled neighbour is found, the DFS proceeds with a forward move; otherwise, it backtracks. Since there are exactly $n$ forward steps and at most $2(n-1)$ backtracks, the total number of rounds remains $O(n)$.

Once DFS completes and all $n$ nodes have been visited, the reserved $\lceil n/3 \rceil$ seeker agents regroup at the root. They then perform a second traversal of the DFS tree to occupy the previously unfilled nodes. This phase is implemented using a \emph{sibling-pointer} mechanism that allows agents to traverse the tree efficiently with only $O(\log n)$ bits of memory.

In summary, the algorithm achieves $O(n)$ round complexity by separating the responsibilities of settlement and probing: while $\lfloor 2n/3 \rfloor$ agents incrementally settle across the graph, the remaining $\lceil n/3 \rceil$ agents support constant-time probing throughout the traversal. This structural division allows for dispersion with optimal time complexity and significantly improves upon prior techniques that relied on sequential or logarithmic-time searches. Now, for our problem, we modify this algorithm from the point where the last explorer settles and the remaining $\lceil n/3 \rceil$ agents start moving towards the root. At this point, we reach what we call a ``\emph{covered configuration}''. This is the stage where all the $\lfloor 2n/3 \rfloor$ explorer agents have settled. Through this algorithm, we obtain the following state at some point:
\begin{itemize}
    \item \textbf{The team of $\lceil n/3 \rceil$ seeker agents at the root}. These agents can now to used to search all neighbours of a particular node within $O(1)$ rounds. 
    \item \textbf{A \emph{covered configuration}}. In such a configuration, every empty node is either visited periodically (within $6$ rounds) by an oscillating agent or has a permanent settler. 
\end{itemize}To reach this configuration, we simply run the dispersion algorithm from~\cite{optimal_disc_sync} until the point where we have a \emph{covered configuration} and a seeker team gathered at the root.

For the arbitrary starting configuration, dispersion is achieved by combining the tree-subsumption method of Kshemkalyani~\cite{KS21} with the methodology used in the rooted case. Suppose the algorithm begins with $\ell$ clusters. Each of these $\ell$ clusters initiates its own dispersion independently. If the depth-first search (DFS) exploration initiated by a cluster does not encounter any other DFS, it proceeds to completion exactly as in the rooted case. However, if two (or more) DFS processes meet during dispersion, the algorithm ensures that they are merged into the DFS with the larger number of settled agents. Specifically, if DFS $i$ meets DFS $j$, and $i$ currently has more settled agents than $j$, then $j$ is collapsed and all of its agents join the execution of $i$’s DFS. This merging operation is referred to as \emph{subsumption}. Thus, whenever two DFSs meet, the smaller one is subsumed into the larger.  

It is important to note that in the original algorithm, even after dispersion completes, multiple DFS trees may exist, particularly when the number of agents $k$ is significantly smaller than the number of nodes $n$. During DFS construction, agents settle one by one on previously unoccupied nodes, while the remaining agents continue exploring in search of empty nodes. The node currently occupied by all unsettled agents of a DFS, and responsible for further exploration, is referred to as its \emph{head}. A DFS is initially identified by the smallest-ID agent that initiates it, although this identifier may change if the DFS is later subsumed by another. For a given DFS $i$, we denote its head by $head(i)$.  

The dispersion algorithm proceeds in two alternating phases: (i) a \emph{growing phase} and (ii) a \emph{subsumption phase}, which repeat until dispersion is complete. In the growing phase, unsettled agents of a DFS explore new nodes and settle sequentially. In the subsumption phase, if a DFS with $d$ settled agents is subsumed by another, the subsumption process requires $O(d)$ rounds. Consequently, if the initial configuration consists of $\ell$ clusters of sizes $k_1, k_2, k_3, \dots, k_\ell$, the overall time complexity of the dispersion algorithm is  
\[
O(k_1 + k_2 + k_3 + \dots + k_\ell) = O(n),
\]  
where $n$ is the total number of agents. This bound already accounts for the time spent in subsumptions, since the total subsumption cost is $\sum O(d_i) = O(n)$ rounds, where $d_i$ denotes the number of settled agents in DFS $i$ before it is subsumed. Hence, we have the following theorem from~\cite{optimal_disc_sync}.
\begin{theorem}\label{thm:dispersion}
Starting from any initial configuration, dispersion can be solved in $O(n)$ rounds using $O(\log n)$ bits of memory per agent in a synchronous setting.
\end{theorem}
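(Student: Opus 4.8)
The statement packages the rooted and arbitrary dispersion routines described above, so the plan is to verify three things for the combined procedure: (i) that it terminates with exactly one agent per node, (ii) that it runs in $O(n)$ rounds, and (iii) that each agent uses only $O(\log n)$ bits. I would prove the rooted case first and then handle the arbitrary case by viewing it as a collection of rooted executions glued together by subsumption.

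\textbf{Rooted case.} Correctness reduces to two observations. The DFS performs exactly $n$ forward moves, so every node of $G$ is eventually visited; and the \emph{covered configuration} guarantees that each empty node is either permanently settled or reachable by an oscillating agent within $6$ rounds, so the second traversal by the $\lceil n/3\rceil$ seekers---driven by the sibling-pointer mechanism---fills every remaining node. Since $G$ has $n$ nodes and $n$ agents, and no two agents share a node at termination, each node hosts exactly one agent. For the running time I would count traversal steps directly: there are $n$ forward steps and at most $2(n-1)$ backtracks, and reserving $\lceil n/3\rceil$ seekers makes every neighbour-probe cost $O(1)$ rounds (rather than the $O(\Delta)$ or $O(\log\Delta)$ of \cite{KS21,sudo_near_linear_dispersion}); the oscillation schedule contributes only an $O(1)$ factor per step, and the second seeker traversal crosses $O(n)$ tree edges. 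Hence the rooted routine finishes in $O(n)$ rounds.

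\textbf{Arbitrary case.} Here I would treat the execution as $\ell$ independent rooted DFSs, one per initial cluster, interacting only through subsumption: whenever two heads meet, the DFS with fewer settled agents is collapsed into the larger one. Each growing step settles one agent at $O(1)$ probing cost, and since each agent settles into the final dispersion exactly once, the growing phases contribute $O(n)$ rounds in total. The crux is bounding the subsumption overhead. A single subsumption of a DFS carrying $d$ settled agents costs $O(d)$ rounds, which I would charge to the $d$ absorbed agents; the goal is to show that the total absorbed mass $\sum_i d_i$, taken over all subsumption events, is $O(k_1+\cdots+k_\ell)=O(n)$. A generic ``merge-smaller-into-larger'' (weighted-union) argument only yields $O(n\log n)$, so the \emph{main obstacle} is to exploit the specific structure of the procedure---namely that a DFS is subsumed only when its head is encountered during another DFS's forward exploration---to recover the sharper linear bound claimed above.

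\textbf{Memory.} Finally I would confirm that every quantity an agent stores fits in $O(\log n)$ bits: its own ID and the current DFS identifier (the smallest ID in its DFS), both in $[0,\lambda]$ with $\lambda\le n^{O(1)}$; a constant number of port numbers (entry, exit, parent, child), each in $[0,\delta(v)-1]\subseteq[0,n-1]$; the sibling pointer used in the second traversal; its role (seeker or explorer) together with a settled flag; and $O(1)$ bounded counters (for instance the $6$-round oscillation clock and the seeker tally), all at most $n$. Since only a constant number of such fields are maintained, the per-agent memory is $O(\log n)$ bits, completing the argument.
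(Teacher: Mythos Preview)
Your proposal tracks the paper's own treatment closely: Theorem~\ref{thm:dispersion} is not proved in this paper but is imported from~\cite{optimal_disc_sync}, and the paragraphs preceding it constitute only a high-level summary of that algorithm. Your rooted-case analysis (count $n$ forward moves, $\le 2(n-1)$ backtracks, $O(1)$ probing via the reserved $\lceil n/3\rceil$ seekers, then a second linear traversal) and your memory accounting are exactly what the paper sketches.

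For the arbitrary case, you and the paper agree on the decomposition into growing and subsumption phases, and both state that the total subsumption cost is $\sum_i O(d_i)=O(n)$. The difference is that you are honest about the difficulty: you correctly observe that a bare smaller-into-larger argument only yields $\sum_i d_i=O(n\log n)$ (each agent can be unsettled and re-collected $O(\log n)$ times), whereas the paper's summary simply asserts the linear bound without justification. That missing step---why the specific mechanics of head-meeting and collapsing in~\cite{optimal_disc_sync} avoid the logarithmic blow-up---is exactly where the real work lives, and it is deferred to the cited paper rather than argued here. So your proposal is not incomplete relative to this paper; it is at the same level of rigour, but with the non-trivial gap explicitly named rather than glossed over.
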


Before presenting our algorithms, we formally define the notion of a minimal dominating set.
\begin{definition}[Minimal Dominating Set]
A subset $D \subseteq V$ of a graph $G = (V,E)$ is called a \emph{dominating set} if every vertex $v \in V \setminus D$ has at least one neighbor in $D$. 
The set $D$ is said to be a \emph{minimal dominating set} (mDS) if no proper subset of $D$ is a dominating set, i.e., removing any vertex from $D$ destroys the domination property.
\end{definition}

\section{Minimal Dominating Set (mDS) from Rooted Configuration}\label{rooted}

In this section, we consider the problem of constructing a minimal dominating set (mDS) in an arbitrary graph using $n$ mobile agents that initially begin at a designated root node. We present an efficient algorithm that completes this task in $O(n)$ rounds, improving upon previous approaches~\cite{run_for_cover_prabhat,manish_icdcit,disc_mst,near_linear_leader}.

\subsection{High-Level Overview} The algorithm proceeds in two main phases. In the first phase, we reposition the $n$ agents into a \emph{covered configuration} (as defined in Section~\ref{sec:prelim_optima_disp}), such that $\lfloor 2n/3 \rfloor$ agents settle across the graph, and the remaining $\lceil n/3 \rceil$ agents---the \emph{seeker agents}---are located at the last visited node, where the final \emph{explorer agent} has just settled, which then traverse back to the root node using the pointers established during the DFS traversal. In the second phase, the agents collaboratively construct a \emph{minimal dominating set} (mDS) of the underlying graph. Our approach builds on the $O(m)$-round algorithm presented in~\cite{run_for_cover_prabhat}. In that work, the root agent is initially assigned the colour \textcolor{red}{red} to indicate its inclusion in the mDS. As the DFS traversal proceeds, each subsequent agent, before settling at a new node, examines the colour of its already-settled neighbours, including its parent. If none of the neighbouring agents are coloured \textcolor{red}{red}, the agent colours itself \textcolor{red}{red}; otherwise, it assigns itself the colour \textcolor{blue}{blue}. This process is inherently sequential, requiring $O(m)$ rounds to complete due to the dependency on local neighbourhood checks at each step.

Our algorithm significantly reduces this complexity by leveraging the $\lceil n/3 \rceil$ seeker agents to parallelise the neighbourhood colour checks. Once the covered configuration is formed, the seeker agents return to the root and re-traverse the graph in a coordinated manner. During this re-traversal, they assist in assigning colours to agents by probing the colours of neighbouring nodes in parallel, enabling each agent to determine its colour in $O(1)$ time. This parallel probing leads to a substantial improvement in overall round complexity.

However, this speed-up introduces a natural question: how do we determine the colour of a node that is currently unoccupied but covered? Since the covered configuration leaves $\lceil n/3 \rceil$ nodes vacant, some nodes may need to be coloured without having a permanently settled agent. We resolve this by using \emph{oscillating agents} - agents that periodically visit such vacant nodes and assign a colour based on the current context of their neighbourhood. This technique avoids the need to settle an agent permanently at such nodes while still ensuring correctness in the colouring process.

\subsection{Details}
The algorithm begins from the \emph{covered configuration}, with the team of $\lceil n/3 \rceil$ seeker agents stationed at the root. The first step is to assign the colour \textcolor{red}{red} to the agent at the root. However, since the agent covering the root may be oscillating, the seeker team might have to wait for a few rounds until the oscillating agent visits the root node. At this point, we recall the two types of oscillation as described in~\cite{optimal_disc_sync}.

\paragraph{Types of Oscillation:}
\begin{itemize}
    \item \textbf{Type A:} Let $u$ be a node with three children $v$, $w$, and $x$ in the DFS tree. Suppose $p(v) < p(w) < p(x)$ are the corresponding port numbers at $u$ connecting to $v$, $w$, and $x$, respectively. In Type A oscillation, the agent at $u$ oscillates in the following sequence: 
    \[
    u \rightarrow v \rightarrow u \rightarrow w \rightarrow u \rightarrow x \rightarrow u \rightarrow \dots
    \]
    In this setup, we say that the agent at $u$ is covering the nodes $v$, $w$, and $x$ through Type A oscillation, and $u$ is referred to as its \emph{home node}.
    
    \item \textbf{Type B:} Let $u$ be a node with parent $\mathsf{parent}(u)$. Suppose $p(\mathsf{parent}(u))$ is the port number at $\mathsf{parent}(u)$ that connects to $u$. Let $v$ and $w$ be two sibling nodes of $u$ such that $v$ and $w$ connect to $\mathsf{parent}(u)$ via port numbers $p(\mathsf{parent}(u)) + 1$ and $p(\mathsf{parent}(u)) + 2$, respectively. Then, a Type B oscillation by the agent at $u$ proceeds as:
    \[
    u \rightarrow \mathsf{parent}(u) \rightarrow v \rightarrow \mathsf{parent}(u) \rightarrow w \rightarrow \mathsf{parent}(u) \rightarrow u \rightarrow \dots
    \]
    In this case, the agent at $u$ covers nodes $v$ and $w$ via Type B oscillation, and again, $u$ is its home node.
\end{itemize}

The type of oscillation an agent follows is determined during the forward and backtrack steps of the dispersion algorithm. Importantly, in both types of oscillation, it suffices to wait at a (possibly empty) node for at most $6$ rounds to ensure that the node is visited (i.e., covered) by some oscillating agent. We now explain how an oscillating agent can simulate or represent the colour of the node it is currently visiting.

\paragraph{Simulating Node Colour via Oscillating Agents:} 
To simulate the colour of multiple nodes visited during oscillation, each agent $r$ maintains a variable tuple:
\[
r.\mathsf{node\_color} = (\mathsf{osc}, \mathsf{color}),
\]
where:
\begin{itemize}
    \item $\mathsf{osc} = 0$ indicates that the agent is at its home node.
    \item $\mathsf{osc} = i\in\{\mathsf{1,2,3}\}$ represents the $i$-th node in its oscillation sequence.
    \item $\mathsf{color} \in \{\textcolor{red}{red}, \textcolor{blue}{blue}\}$ denotes the colour of the node currently being visited.
\end{itemize}

Let us illustrate this using the Type A oscillation pattern described above. Suppose node $u$ is the home node of an oscillating agent $r$, and its children $v$, $w$, and $x$ should have colours: \textcolor{blue}{blue}, \textcolor{red}{red}, and \textcolor{blue}{blue}, respectively. Then, agent $r$ performs the following updates during its oscillation:

\begin{itemize}
    \item At home node $u$: $r.\mathsf{node\_color} \gets (0, \textcolor{blue}{blue})$
    \item Move to $v$: $r.\mathsf{node\_color} \gets (1, \textcolor{blue}{blue})$
    \item Return to $u$: $r.\mathsf{node\_color} \gets (0, \textcolor{blue}{blue})$
    \item Move to $w$: $r.\mathsf{node\_color} \gets (2, \textcolor{red}{red})$
    \item Return to $u$: $r.\mathsf{node\_color} \gets (0, \textcolor{blue}{blue})$
    \item Move to $x$: $r.\mathsf{node\_color} \gets (3, \textcolor{blue}{blue})$
    \item ...
\end{itemize}

For Type-B oscillation, we employ a similar assignment technique, with the $\mathsf{osc}$ value restricted to $\{\mathsf{1,2,3}\}$. Each time an oscillating agent receives its colouring information from the seeker team, it records the colour associated with every node it visits, together with the corresponding $\mathsf{osc}$ value that uniquely identifies the node within its oscillation cycle. To support this process, each agent maintains a constant-sized internal memory capable of storing up to three distinct colour states, which is sufficient for updating the $\mathsf{node\_color}$ variable according to its oscillation pattern. This mechanism guarantees that even vacant nodes—those not permanently occupied—are virtually coloured by the oscillating agents. Having established this, we now proceed to describe our main algorithm.

In the covered configuration, $\lceil n/3\rceil$ seeker agents from the root begin constructing the dominating set. Using the $child$, $sibling$, and $parent$ pointers, they can traverse the graph in $O(n)$ rounds (a technique commonly used; as in~\cite{optimal_disc_sync,butterfly_spaa_prabhat}). When the seeker group meets the home agent at the root, it initiates the colouring process: the root agent sets $\mathsf{node\_color} \gets (0, \textcolor{red}{red})$. The seekers then continue along the DFS, and at the next node, instruct the oscillating agent to set $\mathsf{node\_color} \gets (0, \textcolor{blue}{blue})$.  

At each step, colouring decisions require examining the neighbours. The seekers employ parallel probing, where agents temporarily branch out to visit neighbours, possibly waiting up to $6$ rounds at each node to meet the oscillating agent and collect its colour, if any. After probing, if no neighbour is coloured \textcolor{red}{red}, the current agent is instructed to set its colour to \textcolor{red}{red}; otherwise, it colours itself \textcolor{blue}{blue}. The seekers then resume DFS traversal. The process continues until all nodes have been visited and coloured, after which the seekers return to the root.

In the final phase of the algorithm, the seeker team permanently settles and assigns colours to itself. To accomplish this, the seekers perform a third traversal of the graph from the root. During this traversal, each seeker agent successively occupies one of the remaining vacant nodes and proceeds as follows:

\begin{itemize}
    \item \textbf{If the agent currently at the node is non-oscillating:}  
    In this case, no new agent needs to settle at the node, as it is already being represented by a coloured non-oscillating agent. The seeker team instructs the non-oscillating agent to copy its current colour value into a new permanent variable: $\mathsf{color\_par \gets node\_color.color}$.  
    This variable, $\mathsf{color\_par}$, stores the final (permanent) colour of the agent representing that node.

    \item \textbf{If the agent currently at the node is oscillating:}  
    This situation is further divided into two cases:

    \begin{enumerate}
        \item \textbf{Oscillating agent away from its home node:}  
            One seeker agent (e.g., the one with the smallest ID) permanently settles at the current node and sets $\mathsf{color\_par \gets node\_color.color}$.

        \item \textbf{Oscillating agent at its home node:}  
            The seeker team performs one final oscillation with it, verifies that all covered nodes are permanently occupied and coloured (settling and permanently colouring any remaining nodes as in the previous case using agents from the seeker team), and then returns to the home node. The oscillating agent is finally instructed to permanently settle there with its permanent colour by setting  $\mathsf{color\_par \gets node\_color.color}$.        
    \end{enumerate}The oscillating agent, once it settles, discontinues its oscillation. The newly settled agents correctly update their pointers to maintain the DFS tree structure. 
\end{itemize}

In this way, the seeker team traverses the graph and assigns the final colour to all the $n$ agents representing each node. Since the graph contains $n$ nodes and the total number of agents is also $n$, the $\lceil n/3 \rceil$ seeker agents exactly match the number of vacant positions in the covered configuration. Thus, all remaining nodes are eventually occupied. The correctness and efficiency of this approach follow from the following three lemmas, derived from~\cite{optimal_disc_sync}:

\begin{lemma}\label{lem:lem1optimal}
    In the covering configuration, every non-home empty node is periodically visited (i.e., \emph{covered}) by an oscillating agent from its corresponding home node.
\end{lemma}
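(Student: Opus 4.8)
The plan is to trace the structure of the covered configuration produced by the dispersion algorithm of~\cite{optimal_disc_sync} and argue that the two oscillation types (Type A and Type B) together account for \emph{every} non-home empty node. First I would recall what ``covered configuration'' means in the dispersion algorithm: it is the state reached when all $\lfloor 2n/3 \rfloor$ explorer agents have settled, and it is an invariant maintained throughout the DFS that every node of the DFS tree is either permanently occupied or designated as ``covered'' by a nearby settled agent. So the statement is essentially asking me to unpack that invariant and confirm that ``covered'' is realized concretely through periodic oscillation visits.

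The key steps, in order, would be: (1) Fix an arbitrary empty node $e$ in the covered configuration that is not the home node of any oscillating agent. (2) Appeal to the decision-making done during the forward and backtrack phases of the DFS (described in Section~\ref{sec:prelim_optima_disp}): at the branching point where $e$ was left empty, the algorithm assigned responsibility for $e$ to a specific settled agent at an adjacent node — either the parent of $e$ in the DFS tree (Type A, when $e$ is one of up to three children of that parent) or a sibling's home (Type B, when $e$ is covered via the parent as an intermediate hop). (3) Verify that in each case the oscillation schedule of the responsible agent $r$ includes $e$ in its cycle, with a fixed $\mathsf{osc}$ index in $\{1,2,3\}$, so $e$ is visited once per oscillation period. (4) Bound the period: since a Type A cycle has length at most $6$ edge-traversals ($u \to v \to u \to w \to u \to x \to u$) and a Type B cycle likewise has bounded length, any observer waiting at $e$ meets $r$ within $6$ rounds — which is exactly the quantitative guarantee quoted in the preliminaries. (5) Conclude that every such $e$ is periodically visited, i.e., covered.

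The main obstacle I anticipate is step (2): making precise the claim that the dispersion algorithm's branching-point logic \emph{partitions} the empty nodes so that each is covered by exactly one oscillating agent, and that no empty node ``falls through the cracks.'' This is really a structural invariant of the algorithm in~\cite{optimal_disc_sync} rather than something provable from scratch here, so the cleanest route is to cite that invariant directly and then do the easy local verification (steps 3--5) that the invariant, once granted, yields periodic visitation within $6$ rounds. A secondary subtlety is handling the interaction between an agent's own home-node status and its oscillation duties — an agent is never asked to oscillate to cover a node that is itself someone's home — but this is ruled out by the hypothesis ``non-home empty node'' and by the disjointness of home assignments, so it does not require separate argument. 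I would therefore keep the proof short: state the invariant from~\cite{optimal_disc_sync}, case on Type A versus Type B, and read off the $6$-round bound from the oscillation cycle lengths.
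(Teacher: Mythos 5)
The paper gives no proof of this lemma at all: it is stated, together with Lemmas~\ref{lem:lem2optimal} and~\ref{lem:lem3optimal}, as a fact ``derived from''~\cite{optimal_disc_sync}, i.e., imported wholesale from the dispersion algorithm's correctness argument. Your proposal is consistent with this: you correctly identify that the only non-trivial content — the invariant that the branching-point logic assigns every empty non-home node to exactly one oscillating agent's cycle — must be cited from~\cite{optimal_disc_sync} rather than proved here, and the remaining local verification (Type~A vs.\ Type~B case split, the $6$-round period bound) matches the paper's own description of the oscillation schedules, so your plan is essentially a faithful unpacking of the citation the paper relies on.
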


\begin{lemma}\label{lem:lem2optimal}
    At the end of achieving the covering configuration, the remaining seeker team contains exactly $\lceil n/3 \rceil$ agents, which matches the number of currently unoccupied nodes in the graph.
\end{lemma}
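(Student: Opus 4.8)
The plan is to establish this as a simple counting argument on top of the structure already guaranteed by Theorem~\ref{thm:dispersion} and the mechanics of the dispersion algorithm of~\cite{optimal_disc_sync}. First I would recall that the algorithm reserves exactly $\lceil n/3 \rceil$ agents as seekers at the outset (from the rooted start at the root node) and designates the remaining $\lfloor 2n/3 \rfloor$ as explorers. None of the seeker agents ever settles during the DFS: they only participate in parallel probing and oscillate-free escorting, and the covered configuration is by definition the moment at which the last explorer settles and the seeker team, still $\lceil n/3 \rceil$ strong, is gathered at the last-visited node and begins moving back to the root. So the seeker count is an invariant of the whole first phase, giving the first half of the claim directly.

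Next I would count occupied versus unoccupied nodes in the covered configuration. Since there are $n$ nodes and the DFS visits all of them (this is exactly the termination guarantee of the dispersion procedure, cf.\ the $n$ forward steps argument), every node is either permanently settled by an explorer or left empty. Exactly $\lfloor 2n/3 \rfloor$ explorers have settled, each at a distinct node (dispersion places at most one agent per node). Hence the number of unoccupied nodes is $n - \lfloor 2n/3 \rfloor = \lceil n/3 \rceil$, which matches the seeker-team size. I would state this as the identity $n - \lfloor 2n/3\rfloor = \lceil n/3\rceil$ and note it holds for every $n$.

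The only subtlety — and the step I expect to need the most care — is making precise that \emph{exactly} $\lfloor 2n/3\rfloor$ explorers settle, i.e., that no explorer remains unsettled and no seeker is ever consumed into a settlement. This relies on the invariant from~\cite{optimal_disc_sync} that the DFS maintains at least $\lceil n/3\rceil$ available seekers until it finishes (the whole point of the covered-node / oscillation machinery is precisely to keep $\lceil n/3\rceil$ nodes empty so the seeker pool is never depleted), together with the fact that the DFS does not terminate until all $n$ nodes are visited and, at branching points, every explorer that can settle does settle before the phase ends. Combining the seeker-reservation invariant with Lemma~\ref{lem:lem1optimal} (which certifies that the $\lceil n/3\rceil$ empty nodes are genuinely covered, so the configuration is well-formed) closes the gap: the empty nodes number exactly $\lceil n/3\rceil$, and the seeker team waiting at the root has exactly this many agents. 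This completes the proof.
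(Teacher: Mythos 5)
Your argument is correct and coincides with the reasoning the paper itself relies on: the paper states this lemma without proof (it is imported from the cited dispersion work), but its surrounding prose contains exactly your counting --- $\lceil n/3\rceil$ agents are reserved as seekers and never settle during the DFS, the covered configuration is by definition the moment at which all $\lfloor 2n/3\rfloor$ explorers have settled at distinct nodes, and $n-\lfloor 2n/3\rfloor=\lceil n/3\rceil$ for every $n$. The extra care you take over why \emph{exactly} $\lfloor 2n/3\rfloor$ explorers settle is a sensible tightening, though in this paper that fact is essentially folded into the definition of the covered configuration rather than argued separately.
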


\begin{lemma}\label{lem:lem3optimal}
    The covering configuration can be achieved in $O(n)$ rounds from a rooted configuration. Additionally, each explorer agent can return to its home node in $O(n)$ rounds, and the seeker team can perform a complete DFS traversal of the graph in $O(n)$ rounds.
\end{lemma}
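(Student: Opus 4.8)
The plan is to prove the three lemmas together, since all of them are essentially bookkeeping consequences of the dispersion algorithm of~\cite{optimal_disc_sync} (Theorem~\ref{thm:dispersion}) combined with the structural properties of the covered configuration described in Section~\ref{sec:prelim_optima_disp}. First I would establish Lemma~\ref{lem:lem1optimal}: recall that the dispersion algorithm never leaves a node ``fully unsettled'' at termination --- every node of the DFS tree is either permanently occupied by an explorer or is a covered node assigned to some home node via a Type~A or Type~B oscillation schedule. I would argue this by induction on the forward/backtrack steps of the DFS: whenever the DFS decides to leave a node empty (this happens only at branching points, where at most three children or two siblings are left vacant), it simultaneously assigns an adjacent settled agent the corresponding oscillation pattern, and this assignment is never revoked before the covered configuration is reached. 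Hence every non-home empty node $v$ lies in the oscillation cycle of exactly one home agent, and the cycle has length at most~$6$ (three out-and-back legs for Type~A, likewise for Type~B), so $v$ is visited at least once every $6$ rounds. This is also where I would note the ``wait at most $6$ rounds to be covered'' guarantee that the main algorithm repeatedly invokes.

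Next, for Lemma~\ref{lem:lem2optimal}, the counting argument: the algorithm reserves $\lceil n/3\rceil$ seeker agents up front and settles $\lfloor 2n/3\rfloor$ explorer agents, one per node, during the DFS's $n$ forward steps. Since $\lceil n/3\rceil + \lfloor 2n/3\rfloor = n$ and the graph has $n$ nodes, exactly $n - \lfloor 2n/3\rfloor = \lceil n/3\rceil$ nodes remain unoccupied when the covered configuration is reached, and these are precisely the covered (empty, non-home in the oscillation sense) nodes from Lemma~\ref{lem:lem1optimal}. I would also remark that no seeker is consumed during the DFS itself --- seekers are only used transiently for $O(1)$-time probing and always return --- so the seeker team still has exactly $\lceil n/3\rceil$ members at the end.

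For Lemma~\ref{lem:lem3optimal} I would split into the three claimed bounds. The first, that the covered configuration is reached in $O(n)$ rounds from a rooted start, is immediate from the analysis recalled in Section~\ref{sec:prelim_optima_disp}: the DFS performs exactly $n$ forward steps and at most $2(n-1)$ backtracks, each step costs $O(1)$ rounds because probing all neighbours in parallel with $\lceil n/3\rceil$ seekers takes $O(1)$ rounds (including the $\le 6$-round wait to meet an oscillating cover), giving $O(n)$ total. For the return of the seeker team / each explorer to its home node in $O(n)$ rounds, I would invoke the $O(\log n)$-bit child/sibling/parent pointer structure established during the DFS: a full traversal of the DFS tree using these pointers visits each edge of the tree a constant number of times, and the tree has $n-1$ edges, so traversal (and in particular returning to the root, or an explorer returning home) costs $O(n)$ rounds; for explorers that merely need to go home after an oscillation, this is bounded by the depth of the tree, which is $O(n)$. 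The same pointer-walk argument gives the $O(n)$-round complete DFS traversal by the seeker team.

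The main obstacle I expect is making the oscillation/covering bookkeeping fully rigorous --- in particular, verifying that the oscillation assignments made locally at different branching points never conflict (no empty node is claimed by two home agents, and no settled agent is asked to run two oscillation patterns at once), and that the $6$-round meeting guarantee survives the interleaving of the seeker team's probing walk with the ongoing oscillations. Since these facts are asserted in~\cite{optimal_disc_sync} and we are permitted to rely on results stated in the excerpt, I would handle this by citing the relevant structural invariants of that algorithm rather than reproving them, and restrict the argument here to checking that our three phases (building the covered configuration, the seeker team's return, and the DFS re-traversal) do not disturb those invariants.
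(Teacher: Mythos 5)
Your proposal is correct and matches the paper's treatment: the paper itself gives no proof of this lemma, simply asserting that it is inherited from the cited dispersion algorithm, and your argument is exactly the analysis the paper recalls in its preliminaries (exactly $n$ forward steps and at most $2(n-1)$ backtracks, each costing $O(1)$ rounds thanks to parallel probing by the $\lceil n/3\rceil$ seekers and the constant $6$-round wait to meet an oscillating cover, plus the child/sibling/parent pointer walk that traverses the $n-1$ tree edges a constant number of times). Your closing caveat --- that the oscillation-assignment invariants are taken from the cited work rather than reproved --- is precisely the stance the paper takes as well.
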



\begin{theorem}\label{thm:rooted}
    Let $G$ be an arbitrary connected simple anonymous graph with $n$ nodes. Suppose $n$ autonomous mobile agents are initially placed at a designated node (the root) of the graph. Then, the agents can identify a minimal dominating set (coloured \textcolor{red}{red}) in $O(n)$ rounds using only $O(\log n)$ bits of memory per agent.
\end{theorem}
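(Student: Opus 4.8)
The plan is to establish Theorem~\ref{thm:rooted} by proving three things: (i) the algorithm terminates in $O(n)$ rounds, (ii) each agent uses only $O(\log n)$ bits of memory, and (iii) the final set of \textcolor{red}{red}-coloured agents (equivalently, the nodes they permanently occupy) forms a minimal dominating set of $G$. I would first dispose of the time and memory bounds, since these follow fairly directly from the preliminaries, and then concentrate the real effort on correctness of the colouring.

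For the time bound, I would decompose the execution into its four stages and bound each by $O(n)$ using the lemmas available. By Lemma~\ref{lem:lem3optimal}, reaching the covered configuration from the rooted start takes $O(n)$ rounds, and the seeker team's return to the root and each of its subsequent DFS traversals of the tree also cost $O(n)$ rounds. The colouring traversal adds only a constant overhead per node: at each of the $n$ nodes the seekers perform parallel probing of the neighbours, waiting at most $6$ rounds per probed node to catch an oscillating agent (Lemma~\ref{lem:lem1optimal}), so the per-step cost is $O(1)$ and the whole traversal is $O(n)$. The final settling traversal is likewise a DFS traversal with $O(1)$ work per node. Summing the $O(1)$ many $O(n)$ stages gives $O(n)$ total. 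For memory, I would observe that each agent stores only: a constant number of pointers ($\mathsf{parent}$, $\mathsf{child}$, $\mathsf{sibling}$), its ID and a few port numbers (all $O(\log n)$ bits since $\lambda \le n^{O(1)}$ and $\Delta \le n$), the constant-size tuple $r.\mathsf{node\_color}$, the permanent colour $\mathsf{color\_par}$, and $O(1)$ additional colour states to support Type~A/Type~B oscillation — altogether $O(\log n)$ bits.

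The substantive part is correctness of the colouring, and here the key step is to argue that the parallel-probing colouring faithfully reproduces the sequential rule of~\cite{run_for_cover_prabhat}, namely: process nodes in DFS discovery order, and colour a node \textcolor{red}{red} iff none of its already-processed neighbours is \textcolor{red}{red}. First I would note that the seeker team visits nodes in exactly the DFS discovery order during the colouring traversal, so the ``already-processed'' neighbours of the current node $v$ are precisely those that precede $v$ in this order. The crux is then the claim that when the seekers probe the neighbours of $v$, they correctly read off the current colour of each neighbour $u$ — whether $u$ hosts a settled agent (colour read directly) or is an empty covered node (colour read from the oscillating agent, which after waiting $\le 6$ rounds is guaranteed to be encountered by Lemma~\ref{lem:lem1optimal}, and which has by then been told its colour if it precedes $v$, or has no colour yet if it does not). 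Given this, the colour assigned to $v$ matches the sequential rule exactly. I would then invoke (or reprove) the standard fact that this sequential greedy rule yields a minimal dominating set: the \textcolor{red}{red} set is dominating because any \textcolor{blue}{blue} node had a \textcolor{red}{red} neighbour at the moment it was coloured, and that neighbour remains \textcolor{red}{red} forever; and it is minimal because every \textcolor{red}{red} node $v$ has a \emph{private} vertex — either $v$ itself is dominated only by $v$ (if no neighbour processed before $v$ was \textcolor{red}{red}, which is exactly why $v$ was coloured \textcolor{red}{red}, and no neighbour processed after $v$ can be \textcolor{red}{red} since $v$ was already \textcolor{red}{red}), so removing $v$ leaves $v$ undominated. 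Finally I would check the settling phase preserves colours: every node ends up occupied by exactly one agent carrying $\mathsf{color\_par}$ equal to the colour computed for that node, using Lemma~\ref{lem:lem2optimal} to match the $\lceil n/3\rceil$ seekers to the $\lceil n/3\rceil$ vacant nodes, and noting that an oscillating agent that settles carries with it exactly the colour of its home node while the seekers settle the remaining covered nodes with their recorded colours.

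The main obstacle I anticipate is the timing/consistency argument in the colouring traversal — ensuring that an oscillating agent always reports the \emph{correct and current} colour of whichever node in its oscillation cycle the probing seeker happens to meet it at. This requires care because a single oscillating agent represents up to three distinct nodes with possibly different colours, and the probe must (a) identify \emph{which} of those nodes it is currently at (via the $\mathsf{osc}$ component of $\mathsf{node\_color}$ and the port through which the seeker arrived), and (b) be sure that node's colour has already been finalised if and only if it was discovered before $v$ in the DFS order. I would handle this by arguing that the colouring traversal follows DFS discovery order, so at the moment the seekers are at $v$, every node discovered earlier than $v$ has already had its $\mathsf{node\_color}$ set (either directly on a settled agent or recorded in the relevant oscillating agent's constant-size table), and no later node has — hence the probe reads exactly the set of already-coloured neighbours and their correct colours, which is all the sequential rule needs. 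A secondary subtlety worth a sentence is that the probing seekers must not disturb the ongoing oscillations (they only read, and wait passively), so the covered configuration is maintained throughout.
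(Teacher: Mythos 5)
Your proposal is correct and follows essentially the same route as the paper's proof: the same stage-by-stage $O(n)$ accounting (covered configuration via Lemma~\ref{lem:lem3optimal}, constant-time parallel probing with the $6$-round wait from Lemma~\ref{lem:lem1optimal}, final settling matched to the vacant nodes via Lemma~\ref{lem:lem2optimal}) and the same $O(\log n)$ memory tally. The only substantive difference is that you additionally spell out why the \textcolor{red}{red} set is a minimal dominating set — the reduction of the parallel probing to the sequential greedy rule in DFS discovery order, and the private-neighbour/independence argument for minimality — which the paper's own proof leaves implicit by deferring to the colouring rule of~\cite{run_for_cover_prabhat}; this is a sound and welcome addition rather than a different approach.
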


\begin{proof}
    From Lemma~\ref{lem:lem3optimal}, the covering configuration can be constructed in $O(n)$ rounds starting from the rooted configuration. Once this configuration is reached, the $\lceil n/3 \rceil$ seeker agents return to the root and initiate a full DFS traversal of the graph. During this traversal, they assist each explorer agent in determining and fixing its colour based on the colouring rules described earlier. Since each node can be processed in $O(1)$ rounds via parallel probing by the seeker team, and the DFS traversal visits each node only a constant number of times, assigning a colour to all nodes takes $O(n)$ rounds. In the final phase, the seeker agents settle at the remaining vacant positions (i.e., non-home oscillated nodes) and simulate the final colour of those nodes. This final deployment and confirmation of colour values also require $O(n)$ rounds. Therefore, the entire process—from constructing the covering configuration, assigning colours, to completing the minimal dominating set (mDS)—is completed in $O(n)$ rounds.

    The memory required per agent is $O(\log(\Delta + n))$ bits: to store port and neighbour information (max $\Delta$), and to manage $n$ agents. The memory complexity remains consistent with that in~\cite{optimal_disc_sync}, as we use the same variables, with the addition of some extra constant number of variables, $\mathsf{node\_color}$, $\mathsf{color\_par}$, etc., which require only $O(1)$ bits per agent. Hence, the agents correctly and efficiently identify a minimal dominating set in $O(n)$ rounds, using $O(\log (\Delta+n))$ bits of memory per agent; since $\Delta \leq n$, the overall complexity simplifies to $O(\log n)$ bits.
\end{proof}

\section{Minimal Dominating Set (mDS) from Arbitrary Configuration}\label{arbitrary}

We now consider the problem of computing a minimal Dominating Set (mDS) when the $n$ agents are placed in an arbitrary initial configuration, potentially distributed across multiple clusters of the graph. This problem can be reduced to the rooted case (Section~\ref{rooted}) through three stages. First, the agents are dispersed using the algorithm of~\cite{optimal_disc_sync}, which guarantees that with $n$ agents exactly one agent occupies each node. Next, the dispersed agents construct a spanning tree rooted at the node containing the smallest-identifier agent. This is achieved by initially forming several trees, which are then merged until only the least-ID tree remains; the process described in~\cite{butterfly_spaa_prabhat}, which completes in $O(n \log n)$ rounds. Once the tree is established, the agents can gather at the root in time proportional to the tree’s diameter (at most $n$). Finally, from this rooted configuration, an mDS is computed using the method described in Section~\ref{rooted}.

This reduction-based approach, however, faces two main challenges. The dispersion procedure lacks a built-in termination detection, preventing the agents from knowing when to initiate the next stage. Moreover, the spanning-tree construction in~\cite{butterfly_spaa_prabhat} assumes global knowledge of the largest identifier $\lambda$, an extra requirement. To overcome these limitations, we propose an alternative approach that works without any global knowledge and improves the overall complexity of mDS computation from a possible $O(n \log n)$ rounds to $O(n)$ rounds.


\subsection{Details}

Our algorithm proceeds in three key stages:  

\begin{enumerate}
    \item \textbf{Stage 1: Dispersion with Termination Detection.} In this stage, the agents first disperse across the graph. We introduce a novel mechanism that enables all $n$ agents to detect when dispersion has completed. During this process, the agents simultaneously construct a spanning tree of the graph in $O(n)$ time. This result is also of independent interest, as it provides a faster construction than existing spanning-tree algorithms~\cite{butterfly_spaa_prabhat, disc_mst, agent_bfs, tree_agent_prabhat, near_linear_leader} and introduces a new leader election algorithm.  
    
    \item \textbf{Stage 2: Gathering via the Spanning Tree.} Using the spanning tree built in Stage~1, the agents gather at the root node. This step effectively reduces the problem to the rooted configuration considered in Section~\ref{rooted}.  
    
    \item \textbf{Stage 3: Minimal Dominating Set Computation.} Finally, from the rooted configuration, the agents compute a minimal Dominating Set (mDS) following the methodology described in Section~\ref{rooted}.  
\end{enumerate}

\subsubsection{Stage 1: Dispersion with Termination Detection.}


In an arbitrary initial configuration, if the total number of agents distributed across all clusters is significantly smaller than the number of nodes $n$, it becomes challenging for the agents to detect the completion of the dispersion process. The difficulty arises because two different DFS trees, initiated from two separate clusters, may never intersect during dispersion. Since the agents can only communicate locally, no agent within a DFS tree can conclusively determine whether the dispersion process has terminated.  

However, when the number of agents is at least $n$, the subsumption algorithm from~\cite{KS21,optimal_disc_sync} can be modified to detect termination. In our setting, with exactly $n$ agents in an $n$-node graph, we modify the dispersion algorithm to incorporate this termination detection. The key idea is that once multiple clusters have completed dispersion, it is guaranteed that, even if these clusters do not directly meet, a DFS tree emerging from any cluster can eventually reach another cluster by traversing an additional outgoing edge (i.e., an edge not internal to the cluster DFS itself). Since clusters disperse at different rates, a dispersed cluster may not immediately find a new agent, although a larger cluster can immediately find one soon after completing its own dispersion.

To begin, each cluster disperses locally according to the methodology of~\cite{optimal_disc_sync}. Agents progressively cover unvisited nodes through the expansion of a DFS tree until the \emph{covered configuration} is reached. When two DFS trees originating from different clusters encounter one another, a \emph{subsumption} operation takes place, in which one DFS tree is absorbed into the other, eventually forming a single \emph{covered} DFS tree.

Consider a DFS $i$ of size $k_i$, where $k_i$ denotes the number of settled agents in the DFS plus the number of seeker agents currently located at $head(i)$. Let $r_t$ be the last agent to settle in DFS $i$, placed at node $v$ whose parent is node $u$. After $r_t$ settles, the seeker team probes from $v$ to determine whether there exists an external edge from DFS $i$ leading either to an uncovered empty node or to a node covered by agents from another DFS. At this stage, two cases may arise:

Case 1. If such an external edge exists at $v$, $r_t$ returns to $u$ and informs the agent covering $u$ to extend its oscillation pattern: either by adding $v$ to its oscillation cycle (if $u$ is already covered by an oscillating agent) or by initiating oscillation between $u$ and $v$ (if $u$ is covered by a non-oscillating agent). This guarantees that $v$ remains covered. Subsequently, $r_t$ becomes a free agent and moves to the newly discovered node and becomes $head(i)$, setting $v$ as its $parent$ (the similar $child$ information is updated at the agent covering $v$ simultaneously). If the new node is empty, $r_t$ waits there until it is potentially reached by another DFS; otherwise, if $r_t$ encounters an agent from another DFS, the subsumption process begins immediately.

Case 2. If no external edge is found from $v$ to an empty node or to a node occupied by another DFS (meaning $v$ is a leaf node or all its edges lead to the DFS $i$ itself), the seeker team collects $r_t$ and continues backtracking until such a node has been discovered. During backtracking, the node $v$ remains covered through the modified oscillation sequence of its parent $u$. Let's assume that DFS $i$ has $k_i$ agents. Since the seeker team has size at least $\lceil \tfrac{k_i}{3} \rceil + 1$, it can probe its current node in $O(1)$ rounds to find a suitable external node while backtracking. Once an external node is identified, $r_t$ settles there in the same manner as described in Case 1. In this situation, the node (say, covered by an agent $r_w$) from which the external edge was detected is designated as the new $parent$, and the discovered node becomes the new $head(i)$. Accordingly, the pointers of $r_t$ and $r_w$ are updated. 

After the new $head$ has been assigned, the rest of the seeker agents now continue to complete the dispersion process following ~\cite{optimal_disc_sync}. Now, importantly, if during its backtracking process the seeker team reaches the root of DFS $i$ and still finds no external outgoing edge, this indicates that all $n$ nodes have been visited, all DFSs have been merged to a single DFS, and the dispersion can now complete. As and when this termination condition is detected, the seeker team settles at the remaining empty nodes as the root initiates a complete graph traversal, simultaneously informing all the agents that dispersion has been achieved and is complete. From the discussion above, we record the following lemma.

\begin{lemma}
\label{lem:guaranteed-meeting-overlap}
If a DFS $i$ performs one extra exploration to an uncovered node $x$, then (i) some other DFS $j$ must eventually reach $x$, and (ii) the waiting time of the new $head(i)$ at $x$ is absorbed into the dispersion time of DFS $j$.
\end{lemma}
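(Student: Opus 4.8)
The plan is to argue both parts by exploiting the invariant that the algorithm maintains throughout Stage~1: the $n$ agents, taken together, are enough to fill every node, so as long as some node is still empty, some DFS has not yet finished and is actively searching for empty nodes. First I would set up the following bookkeeping. Consider the moment DFS~$i$ completes its extra exploration step and places its new $head(i)$ at the previously uncovered node $x$. By hypothesis $x$ was \emph{uncovered} just before this step, i.e.\ it was neither occupied nor reachable by any oscillating agent of any DFS. I would then split into two cases depending on whether $x$ lies in the territory already claimed (settled or covered) by some other DFS $j$, or whether $x$ is genuinely fully unsettled. In the first case part~(i) is immediate: the head of $j$ (or an oscillating agent of $j$) will revisit $x$ in the course of $j$'s own ongoing dispersion or subsumption, and the two DFSs meet there. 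The substantive case is the second one.

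For the case where $x$ is fully unsettled, the key step is a counting/connectivity argument. Since $G$ is connected and there are exactly $n$ agents for $n$ nodes, at any time before global termination the set of empty-and-uncovered nodes is nonempty only if at least one DFS is still in its growing phase and will eventually try to settle an agent on a node adjacent to the current frontier. I would argue that $x$, being adjacent to DFS~$i$'s territory (it was found along an external edge from $v$), cannot remain empty forever: either DFS~$i$ itself would come back for it (but $i$ has parked $head(i)$ there and stopped growing, by the algorithm's description), or some other DFS $j$, expanding along a path in $G$ from its own cluster, eventually reaches a node in $N[x]$ and then, probing its neighbourhood, discovers the agent sitting at $x$. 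Because $G$ is finite and connected and the union of all DFS territories strictly grows over time (each growing phase settles a new agent, each subsumption is finite by Theorem~\ref{thm:dispersion}'s accounting), within $O(n)$ further rounds the territory of some DFS must touch $x$; that DFS is the desired $j$, proving~(i).

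For part~(ii) I would charge the waiting time of $head(i)$ at $x$ against $j$'s dispersion clock. The point is that while $head(i)$ sits idle at $x$, DFS~$j$ is \emph{not} idle: it is performing forward moves, backtracks, and subsumptions, all of which are already counted in the $O(n)$ bound of Theorem~\ref{thm:dispersion} applied to $j$'s (growing) agent set. Formally, I would define the idle interval of $head(i)$ as $[t_0, t_1]$, where $t_0$ is when it parks at $x$ and $t_1$ is when $j$ first reaches $x$; during $[t_0,t_1]$ every round is a round in which $j$ makes progress toward its covered configuration, so $t_1 - t_0$ is bounded by the residual dispersion time of $j$, which is part of the global $O(n)$ total. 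Hence no waiting is ``lost'': the amortised cost is absorbed. I expect the main obstacle to be making the phrase ``$j$ makes progress every round'' fully rigorous — one has to rule out a scenario where \emph{all} remaining DFSs are simultaneously stalled, each parked at an empty node waiting for someone else. This is precisely excluded by the invariant that stalled heads sit only at nodes that are adjacent to \emph{another} DFS's territory (Case~1/Case~2 of the construction guarantee the external edge leads either to an empty node or to another DFS), together with a minimality argument: among all currently stalled heads, consider the one whose DFS is smallest (or was created earliest); the node it waits at is adjacent to a strictly larger/older DFS, which therefore cannot itself be stalled waiting on it, and following this chain of "adjacent to a bigger one" relations must terminate at an actively-growing DFS. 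Pinning down that this chain is acyclic and finite is the crux of the argument.
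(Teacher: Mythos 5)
Your proposal follows essentially the same route as the paper's own proof: the same initial case split on whether $x$ is already occupied, the same counting observation that with exactly $n$ agents on $n$ nodes every uncovered node must eventually be reached, and the same amortisation of the idle time of $head(i)$ against the concurrent progress of the DFS $j$ that eventually arrives. In fact the paper's proof is substantially terser than yours: it asserts that ``some other DFS will expand to $x$'' and that the waiting time ``is therefore bounded by the remaining dispersion time'' of $j$, without ever confronting the scenario you single out as the crux, namely that \emph{all} surviving DFSs might simultaneously be parked at empty nodes, each waiting for someone else, so that no DFS is making progress and no bound on the waiting time follows.

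You are right that this is the point that needs an argument, but the chain-of-adjacency resolution you sketch is the weak link: you would have to define the ``waits on'' relation precisely and prove it acyclic, and the ordering you propose (smallest or oldest DFS first) does not obviously do this, since a stalled head sits at a node that was \emph{uncovered} when it arrived and hence is not, at parking time, adjacent in any controlled way to a larger DFS. A cleaner closure is by counting. A DFS parks its head at an empty node only if its probe actually finds an uncovered node, hence only if the total number of nodes covered by all DFSs is strictly less than $n$. A fully stalled DFS with $k_i$ agents accounts for exactly $k_i$ covered nodes (its territory plus the parked head), and no DFS ever covers more nodes than it has agents (the number of empty-but-covered nodes always equals the number of reserved seekers, cf.\ Lemma~\ref{lem:lem2optimal}). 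Since $\sum_i k_i = n$, an uncovered node can exist only if some DFS $j$ covers strictly fewer than $k_j$ nodes, i.e.\ is still in its growing phase and therefore makes progress every $O(1)$ rounds; in particular, the last DFS to complete its own dispersion finds every node already covered, so its probe necessarily hits an agent of another DFS and triggers a subsumption rather than a parking step. This is the statement that should be made explicit; neither your sketch nor the paper's proof actually establishes it.
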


\begin{proof}
If $x$ is already occupied when $head(i)$ arrives, the lemma follows immediately. Suppose instead that $x$ is empty and uncovered. Since there are exactly $n$ agents on $n$ nodes, each settled agent occupies a distinct node, and no DFS can contain more agents than the number of nodes it spans. Hence, as dispersion proceeds, every uncovered node must eventually be visited; in particular, some other DFS will expand to $x$, establishing~(i). For~(ii), observe that DFS $j$, which eventually reaches $x$ continues its dispersion concurrently while $head(i)$ is waiting. The time spent by $head(i)$ waiting at $x$ is therefore bounded by the remaining dispersion time of the cluster belonging to DFS $j$.  

\end{proof}

\subsubsection*{Dispersed Configuration:}

The fully dispersed case, where the system begins with exactly one agent per node, presents an additional challenge. In a non-dispersed setting, the presence of at least two agents within a DFS ensures progress: one agent continues exploration while the other maintains coverage of the vacated node, allowing all clusters to merge into a single DFS by successively acquiring the isolated agents, as discussed earlier. By contrast, in a dispersed configuration, every agent is isolated, and adjacent singleton agents have no immediate awareness of each other. This lack of coordination may lead to expansion conflicts—two neighbouring agents may simultaneously move toward one another, vacating their nodes without ever meeting, thereby stalling progress.  

To overcome this, we employ a lightweight symmetry-breaking procedure. Each singleton agent, at the start of the algorithm, sequentially scans the bits of its unique identifier from least to most significant bit. For each bit, if it is $1$, the agent moves through a fixed adjacent port (chosen arbitrarily but kept constant throughout the process) and then returns; if the bit is $0$, it remains stationary for two rounds. Agents terminate the procedure either upon joining a DFS or after completing the scan of all bits. Similarly, if an agent executing this procedure encounters an existing DFS tree, it immediately halts and integrates into the visiting DFS. Once an agent has fully scanned all bits of its identifier, it remains stationary unless later acquired by a DFS. Agents already belonging to a two-node DFS ignore further symmetry-breaking attempts by other neighbours (the visiting agent is instead absorbed into the existing tree). The procedure requires no knowledge of the maximum identifier length $\lambda$, as each agent terminates locally once its identifier bits are exhausted or it joins a DFS.

Although a meeting is not guaranteed for every pair of adjacent agents—as illustrated by the following example—some meeting is nevertheless inevitable. Suppose agents $a$ and $b$ are adjacent, with $a.ID = 1000$ and $b.ID = 1001$. During the bit-scanning process, each agent selects a fixed port at the start through which it moves whenever it encounters a $1$ bit in its identifier. Now, assume that $a$ selects the port leading to $b$, but $b$ selects a different port that leads elsewhere. At the least significant bit (LSB), $a$ observes a $0$ and remains stationary, while $b$ sees a $1$ and moves through its chosen port—away from $a$—and then returns. In the next two bit positions, both $a$ and $b$ see $0$s and remain stationary. At the final bit, both see a $1$ and move simultaneously—but along their independently fixed ports, again avoiding each other. Throughout the entire scan, $a$ and $b$ never co-occupy a node, and thus never detect one another, despite being adjacent and having differing identifiers. This scenario demonstrates that local symmetry-breaking may fail between specific neighbouring pairs, especially when movement directions are not mutually aligned.

Nonetheless, it is still possible to guarantee that \emph{some} meeting will inevitably occur. Specifically, we show that the agent with the maximum identifier, denoted $r_{\max}$, is guaranteed to meet one of its neighbours during the bit-scanning process. Since $r_{\max}$ has a uniquely larger identifier, there must exist a bit position where its bit is $1$ while the corresponding bit of the neighbouring agent reached through its fixed port is $0$. When this bit is scanned, $r_{\max}$ moves to that neighbour’s node while the neighbour remains stationary, resulting in a meeting. This ensures that a two-node DFS is initiated, from which the full DFS can grow as described earlier.

\begin{lemma}[Meeting Guarantee via Maximum ID]
Let $r_{\max}$ be the agent with the maximum identifier among all agents in a fully dispersed configuration. Then, during the bit-scanning symmetry-breaking process, $r_{\max}$ is guaranteed to meet the neighbour connected through its arbitrarily chosen movement port before completing its identifier scan, regardless of the neighbour's own movement behaviour.
\end{lemma}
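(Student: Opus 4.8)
The plan is to exhibit an explicit bit position at which $r_{\max}$ is \emph{forced} to step onto the node of the neighbour sitting at the far end of its fixed port. Write $u=r_{\max}$, let $p_u$ be the port $u$ fixes at the outset, and let $w$ be the endpoint of $p_u$. Let $j^\star$ be the most significant position in which $\mathrm{ID}(u)$ and $\mathrm{ID}(w)$ disagree; such a position exists because identifiers are unique, and since $u$ has the strictly largest identifier we have $\mathrm{ID}(u)>\mathrm{ID}(w)$, so by the usual rule for comparing binary strings the $j^\star$-th bit of $\mathrm{ID}(u)$ is $1$ and that of $\mathrm{ID}(w)$ is $0$. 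In particular $j^\star$ is at most the index of the most significant bit of $\mathrm{ID}(u)$, so $u$ does reach bit $j^\star$ during its scan (the situation where $u$ leaves the routine earlier is treated separately at the end).

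Next I would lay out the synchronous timeline. Every agent spends exactly two rounds per bit and all agents start together, so there is a well-defined two-round block during which every still-scanning agent processes bit $j^\star$; moreover, at the start of that block every such agent sits on its own node, since a $0$-bit leaves it there and a $1$-bit sends it out and back within the two rounds of the preceding block. The core observation is then a one-line case check: in the first round of the block for bit $j^\star$, $u$ traverses $p_u$ (its $j^\star$-th bit being $1$) and lands on $w$'s node, and $w$ is present on that node because either (i) $w$ is still scanning and its $j^\star$-th bit is $0$, so it stays put, or (ii) $j^\star$ exceeds the most significant bit index of $\mathrm{ID}(w)$, in which case $w$ finished its scan at an earlier block and has been stationary ever since. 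Either way $u$ and $w$ co-locate, which is a meeting; and because lower bits may already have forced a collision, this happens \emph{no later than} the block for bit $j^\star$, hence strictly before $u$ exhausts its identifier.

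It then remains to dispose of the case in which $u$ or $w$ exits the bit-scanning routine before block $j^\star$. The only way to exit early is by joining a DFS, and this can only happen through co-location with an adjacent agent --- either the agent stepped onto an occupied (or covered) node, or an adjacent agent stepped onto its own. Thus if $u$ exits early it has already met a neighbour; and if $w$ exits early, then $w$'s node has been taken over by that DFS, so when $u$ probes it at block $j^\star$ it still encounters an agent of (or a node covered by) that DFS. In every subcase, a meeting of $u$ with an adjacent agent --- and hence the seeding of a two-node DFS, from which the full DFS grows as described earlier --- is guaranteed before $u$ completes its scan.

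The step I expect to be the genuine obstacle is this last one: making precise, inside the mechanics of the dispersion routine of Section~\ref{sec:prelim_optima_disp}, that once $w$'s node has been claimed by a DFS it cannot be simultaneously empty and unguarded during the single round in which $u$ probes it. I would handle this by appealing to the covering invariant (every node of a DFS is either permanently occupied or covered by an oscillating agent that any waiting agent meets within a constant number of rounds) together with the fact that $u$ can afford that constant wait while probing. Absent any concurrent DFS interference, the clean block-$j^\star$ argument above is essentially the parity/comparison fact stated in the first paragraph and needs no further work.
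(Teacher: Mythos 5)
Your proof takes essentially the same route as the paper's: both identify the most significant bit position at which $\mathrm{ID}(r_{\max})$ has a $1$ while the neighbour across the fixed port has a $0$, and conclude that in the corresponding two-round block $r_{\max}$ steps onto the neighbour's node while the neighbour is stationary, forcing a meeting before the scan ends. Your extra care about block alignment, unequal identifier lengths, and early exit via DFS absorption goes beyond the paper's proof, which treats only the clean non-interference case, but the core argument is identical.
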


\begin{proof}
Let $r_{\max}$ be the agent with the maximum identifier among all agents in the system. Let $b$ denote the agent located at the node connected to $r_{\max}$ through its arbitrarily chosen fixed port for the symmetry-breaking process. Since all identifiers are distinct and $r_{\max}$ holds the maximum, when comparing the binary strings of $r_{\max}$ and $b$ from the most significant bit (MSB) to the least, there must exist a first bit position $i^\star$ where $\text{bit}_{i^\star}(r_{\max}) = 1$ and $\text{bit}_{i^\star}(b) = 0$. If no such position existed, then $b$ would match or exceed $r_{\max}$ in all bit positions, contradicting the maximality of $r_{\max}$.

Now consider the behavior of both agents during the symmetry-breaking process. In the rounds corresponding to bit position $i^\star$, $r_{\max}$ will move through its fixed port toward $b$'s node and return, since its bit is $1$. Meanwhile, $b$, seeing a $0$ in the same position, remains stationary for both rounds. As a result, $r_{\max}$ will visit $b$'s node while $b$ is idle there, ensuring that a meeting occurs at that step. Therefore, a meeting is guaranteed before $r_{\max}$ completes its full identifier scan.
\end{proof}

\begin{lemma}\label{lem:stage1}
    The dispersion of the $n$ agents, the construction of the single DFS tree, and the detection of termination together complete in $O(n)$ rounds. 
\end{lemma}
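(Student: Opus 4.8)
The plan is to account separately for the three sources of running time named in the statement—(i) the intra-cluster dispersions, (ii) the subsumption/merging operations when DFS trees meet, and (iii) the extra waiting incurred by the termination-detection mechanism (Case 1 and Case 2 of the construction, plus the dispersed-configuration symmetry-breaking preamble)—and show each contributes only $O(n)$ rounds, so that the total, even summed, is $O(n)$.

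First I would handle the dispersed-configuration preamble. Each singleton agent scans the bits of its identifier, spending exactly two rounds per bit, so the preamble lasts $O(\log \lambda) = O(\log n)$ rounds; by the Meeting Guarantee lemma, $r_{\max}$ meets a neighbour before its scan ends, so at least one two-node DFS is seeded within that window, after which the general (non-dispersed) machinery takes over. This is a lower-order additive term. Next, for the intra-cluster growth: by the analysis imported via Theorem~\ref{thm:dispersion} and Lemmas~\ref{lem:lem1optimal}--\ref{lem:lem3optimal}, a cluster of $k_i$ agents reaches its covered configuration in $O(k_i)$ rounds, and each subsumption of a DFS holding $d$ settled agents costs $O(d)$ rounds. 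Since every agent is settled in at most one DFS at a time and an agent's settled node is "charged" for subsumption at most a constant number of times over the whole execution (each subsumption strictly increases the size of the absorbing DFS, so an agent can be on the losing side only $O(\log n)$ times—but the standard amortization bounds the total subsumption work by $\sum_i O(d_i) = O(n)$ as already argued in the text around Theorem~\ref{thm:dispersion}), the combined cost of growth and subsumption is $O(k_1 + \dots + k_\ell) = O(n)$.

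The step I expect to be the main obstacle is bounding the extra waiting introduced by termination detection—specifically the rounds during which a new $head(i)$, after an extra exploration step onto an uncovered node $x$ (Case 1), simply waits at $x$ for another DFS to arrive. Naively these idle rounds could be unbounded, but Lemma~\ref{lem:guaranteed-meeting-overlap} is exactly the tool: it guarantees some DFS $j$ reaches $x$, and that the wait of $head(i)$ at $x$ is "absorbed into" the dispersion time of cluster $j$. I would make this charging argument precise: assign each waiting round of $head(i)$ to a distinct unit of forward progress (a forward DFS step or a settle event) performed by DFS $j$ during that same round; since DFS $j$'s total forward progress is $O(k_j)$ and, across all clusters, $\sum_j O(k_j) = O(n)$, the aggregate waiting time over all extra-exploration events is $O(n)$. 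The backtracking in Case 2 is easier—each backtrack edge of a DFS is traversed $O(1)$ times and probing each backtracked node costs $O(1)$ rounds with the seeker team of size $\ge \lceil k_i/3\rceil + 1$, so backtracking contributes $O(n)$ in total, exactly as in the rooted analysis. Finally, detecting termination itself (the seeker team backtracking to the root of the unique surviving DFS and finding no external outgoing edge) costs one final $O(n)$-round DFS traversal, and the closing broadcast that informs all agents that dispersion is complete, together with the seekers settling at the remaining empty nodes, is another $O(n)$ rounds by Lemma~\ref{lem:lem3optimal}. Summing the $O(\log n)$ preamble, the $O(n)$ growth-plus-subsumption cost, the $O(n)$ absorbed-waiting cost, the $O(n)$ backtracking cost, and the $O(n)$ termination-broadcast cost yields the claimed $O(n)$ bound, and the spanning tree emerges as the final merged DFS tree at no extra asymptotic cost.
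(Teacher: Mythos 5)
Your proposal is correct and follows essentially the same route as the paper: an $O(\log\lambda)=O(\log n)$ additive preamble for the symmetry-breaking scan, per-cluster $O(k_i)$ bounds for growth, subsumption, and backtracking that sum to $O(n)$, and a final $O(n)$ traversal for termination broadcast. The only difference is that you make the waiting-at-$x$ cost explicit via a charging argument against the forward progress of the arriving DFS $j$, whereas the paper's proof of this lemma omits that accounting entirely and delegates it to Lemma~\ref{lem:guaranteed-meeting-overlap} (invoked only later when assembling Theorem~\ref{thm:arbitrary}); your version is, if anything, the more complete of the two.
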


\begin{proof}
First of all, since each singleton cluster initiates a symmetry-breaking mechanism at the start, a maximum time of $O(\log\lambda)=O(\log n)$ (worst case scenario, for a dispersed configuration) is required at first before the actual dispersion begins (this symmetry-breaking mechanism is independent and does not affect a dispersion process happening elsewhere). Then, the algorithm consists of four phases: dispersion of individual clusters, subsumption of DFS trees, backtracking to select an external edge and a final traversal. By~\cite{optimal_disc_sync}, the dispersion of a cluster of size $k_i$ completes in $O(k_i)$ rounds, and any backtracking to search for external nodes can also be performed in $O(k_i)$ rounds due to parallel probing. When two DFS trees of sizes $k_i$ and $k_j$ meet, the subsumption process takes $O(\max\{k_i,k_j\})$ rounds~\cite{KS21}. Once all clusters have been subsumed into a single DFS tree, the final traversal from the root to visit every node requires $O(n)$ rounds, as such traversals can be carried out without incurring significant memory overhead, by equipping each child agent with a $sibling$ pointer, as employed in~\cite{butterfly_spaa_prabhat,agent_bfs,optimal_disc_sync}. Now, in the worst case, subsumptions occur sequentially, with a single DFS tree absorbing others one by one. For a cluster of size $k_i$, the total cost of dispersion, subsumption and backtracking is $O(k_i)$, and summing over all clusters yields $\sum_i O(k_i) = O(n)$. Hence, the overall time complexity of Stage~1 is $O(\log n)+O(n)=O(n)$ rounds.  
\end{proof}

\begin{remark}[New Oscillation]
    Since the algorithm adds one extra oscillation for the last settled agent before exploring an external edge in the DFS, each agent waits $8$ rounds (instead of $6$) at a node to verify whether it is covered.  
\end{remark}
\begin{remark}[Leader Election]
    The root agent at the end of the Stage $1$ can function as an elected leader. 
\end{remark}

\subsubsection{Stage 2: Gathering via the Spanning Tree.}  
To gather all the agents at the root, the process proceeds as follows. The gathering begins at the leaf agents of the spanning tree, which, having no children, move directly to their parent and accumulate there. At any intermediate stage, consider an agent $a$ that serves as an internal node of the tree. Agent $a$ remains stationary until it has received all the agents collected by its children. Once this condition is satisfied, $a$ moves to its parent, carrying along all the gathered agents. By repeating this process up the tree, all agents are eventually collected at the root node. When the root has received all the $n-1$ other agents, the gathering phase is complete, and we can proceed to the final stage of the algorithm, as we note the following lemma from our discussion.   

\begin{lemma}\label{lem:stage2}
    Gathering via the Spanning Tree takes $O(n)$ rounds.
\end{lemma}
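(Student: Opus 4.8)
The plan is to bound the number of rounds in the tree-gathering procedure by analyzing how long an agent can remain waiting at a node before it moves up to its parent. The key structural fact is that the spanning tree built in Stage~1 has at most $n$ nodes, hence depth at most $n-1$, and gathering proceeds strictly bottom-up: a leaf moves immediately, and an internal agent moves exactly once, as soon as all of its children's subtrees have delivered their agents to it. So I would first argue by induction on the height of a subtree that all agents in a subtree rooted at $a$ arrive at $a$ within $O(\text{height}(a))$ rounds. The base case (leaves, height $0$) is immediate since a leaf departs in the first round. For the inductive step, if $a$ has children $c_1,\dots,c_t$ with subtree heights $h_1,\dots,h_t$, then by the inductive hypothesis each $c_j$ has all of its subtree's agents by round $O(h_j)$, departs in the next round, and reaches $a$ one round later; hence $a$ has everything by round $O(\max_j h_j) + O(1) = O(\text{height}(a))$.

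Applying this at the root gives that all $n$ agents reach the root within $O(\text{height}(\text{root})) = O(\text{depth of the tree}) = O(n)$ rounds, which is the claimed bound. I would also note explicitly why a single edge-traversal per level suffices: when an agent moves to its parent it carries the entire bundle of co-located agents in one step (the model permits arbitrarily many agents to traverse an edge together and to exchange memory instantaneously upon colocation), so no congestion or serialization penalty arises along a tree edge. The only bookkeeping an agent needs is to recognize when all its children have reported — it can do this by counting the ports toward its children (stored as part of the DFS/spanning-tree structure, requiring $O(\log n)$ bits) and checking that an agent-bundle has arrived through each, so the memory bound is preserved and consistent with the rest of the construction.

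The main obstacle, such as it is, is making precise that an internal agent can correctly detect that \emph{all} of its children have finished, using only $O(\log n)$ bits and local communication: an agent must know the set (or at least the number) of its tree-children and match incoming bundles against them. This is handled by the fact that the spanning-tree pointers established in Stage~1 already encode child/parent/sibling relationships, so each agent locally knows its children's port numbers; a child that has gathered its subtree simply walks up through the corresponding port, and the parent waits until it has received a bundle from every child port. A secondary point to address is the edge case where the tree is a path of length $\Theta(n)$, which shows the $O(n)$ bound is tight in the worst case and confirms we cannot hope for better than linear here; but since the target is exactly $O(n)$, this is consistent. With these observations the lemma follows directly from the height-induction argument above.
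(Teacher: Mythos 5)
Your proof is correct and follows essentially the same bottom-up gathering argument the paper sketches (the paper in fact states this lemma with only an informal description of the leaf-to-root accumulation and no formal proof). Your height-induction formalizes that sketch and correctly notes the two facts that make it work in this model---arbitrarily many agents may cross an edge together, and the Stage~1 child/parent/sibling pointers let a parent detect when all children have reported---so nothing further is needed.
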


\subsubsection{Stage 3: Minimal Dominating Set Computation.}  
With all $n$ agents assembled at the root, the agents collaboratively execute the rooted-case minimal dominating set (mDS) algorithm described in Section~\ref{rooted}, thereby computing the final minimal dominating set of the graph.  We therefore, obtain the following theorem from lemmas~\ref{lem:guaranteed-meeting-overlap},\ref{lem:stage1},\ref{lem:stage2}.

\begin{theorem}\label{thm:arbitrary}
    Let $G$ be an arbitrary connected simple anonymous graph with $n$ nodes. Suppose $n$ autonomous mobile agents are initially placed arbitrarily across the graph. Then, the agents can identify a minimal dominating set (coloured \textcolor{red}{red}) in $O(n)$ rounds using only $O(\log n)$ bits of memory per agent.
\end{theorem}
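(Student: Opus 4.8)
The plan is to assemble the theorem from the three-stage pipeline already laid out in Section~\ref{arbitrary}, invoking the lemmas it depends on and then verifying that each stage respects the $O(n)$-round and $O(\log n)$-bit budgets, and that correctness (a genuine \emph{minimal} dominating set, coloured \textcolor{red}{red}) is preserved across the reduction to the rooted case. First I would note that by Lemma~\ref{lem:stage1}, starting from any arbitrary placement of the $n$ agents, Stage~1 simultaneously disperses the agents (one per node), builds a single DFS/spanning tree rooted at the node holding the smallest-ID agent, and detects termination, all in $O(n)$ rounds; the preliminary symmetry-breaking phase costs only $O(\log \lambda) = O(\log n)$ rounds (using $\lambda \le n^{O(1)}$) and is absorbed into the bound. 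Lemma~\ref{lem:guaranteed-meeting-overlap} is what certifies that the termination-detection mechanism is sound: every extra exploration to an uncovered node is eventually answered by another DFS, and the waiting time is charged against that DFS's own dispersion time, so no waiting inflates the total beyond $\sum_i O(k_i) = O(n)$.

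Next I would invoke Lemma~\ref{lem:stage2}: once the spanning tree is in place, the leaf-to-root convergecast gathers all $n$ agents at the root in $O(n)$ rounds (bounded by the tree's depth, which is at most $n$), and each agent needs only its parent pointer plus a small counter of how many agents it is currently escorting, hence $O(\log n)$ bits. At the end of Stage~2 the configuration is exactly the rooted configuration of Section~\ref{rooted}: all $n$ agents co-located at a designated root of a connected anonymous $n$-node graph. I would then apply Theorem~\ref{thm:rooted} verbatim: from this rooted configuration the agents compute a minimal dominating set (with the set members coloured \textcolor{red}{red}) in $O(n)$ rounds using $O(\log n)$ bits per agent. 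Adding the three stages gives $O(n) + O(n) + O(n) = O(n)$ rounds, and since all three stages were already shown to use $O(\log n)$ bits per agent (the colouring variables $\mathsf{node\_color}$, $\mathsf{color\_par}$ and the DFS/tree pointers each fit in $O(\log n)$ bits, with $\Delta \le n$), the memory bound follows.

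The main obstacle I anticipate is not the arithmetic of summing the stage costs but justifying that Stage~1 genuinely delivers a \emph{correct} rooted input for Stage~3 in \emph{all} arbitrary configurations, including the fully dispersed one. Concretely, two things must be argued carefully: (i) that a single DFS tree really does emerge — the subsumption rule merges meeting DFS trees into the larger one, and Lemma~\ref{lem:guaranteed-meeting-overlap} plus the Meeting-Guarantee-via-Maximum-ID lemma ensure that in the dispersed case at least one adjacent pair ($r_{\max}$ and the neighbour on its fixed port) always meets, bootstrapping a two-node DFS that then absorbs everything; and (ii) that the root of this final tree is unique and well-defined, so that gathering and the rooted mDS procedure are unambiguous — this follows because the surviving DFS is identified by the globally smallest-ID agent, which every agent learns during the final root-initiated traversal. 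Once these structural facts are in hand, the reduction is clean and the theorem follows immediately from Theorems~\ref{thm:dispersion} and~\ref{thm:rooted} together with Lemmas~\ref{lem:guaranteed-meeting-overlap},~\ref{lem:stage1}, and~\ref{lem:stage2}. I would also remark that, as by-products, the elected leader (Remark on Leader Election) and the spanning tree fall out of Stage~1 at no extra cost, matching the rows of Table~\ref{tbl:comparison}.
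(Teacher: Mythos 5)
Your proposal is correct and follows essentially the same route as the paper, which itself derives Theorem~\ref{thm:arbitrary} directly from Lemmas~\ref{lem:guaranteed-meeting-overlap}, \ref{lem:stage1}, and~\ref{lem:stage2} together with the rooted-case result of Theorem~\ref{thm:rooted}, summing the three $O(n)$-round stages. The additional care you take in justifying that a single rooted DFS tree emerges (via subsumption and the maximum-ID meeting guarantee) only makes explicit what the paper leaves implicit in its Stage~1 discussion.
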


\section{Conclusion and Future Directions}\label{conclusion}
We proposed two linear-time algorithms for computing a minimal dominating set using mobile agents in both rooted and arbitrary configurations, achieving $O(n)$ round complexity in each case. The framework also facilitates linear-time solutions to related problems such as leader election, gathering, and spanning tree construction, all without requiring any global knowledge.

Future work includes reducing the number of participating agents — for instance, initiating the computation with only those located on a minimal dominating set and enabling them to oscillate to simulate communication. Another challenge is to develop mechanisms that operate under asynchronous settings where agents may not meet systematically, causing delays or incorrect outcomes. In addition, another important direction is to design fault-tolerant variants that compute a minimal dominating set despite crash or Byzantine failures.

\subsection*{Acknowledgements}
The authors thank \textbf{Dr.~Manish Kumar}, \textit{Postdoctoral Researcher} at \textit{IIT Madras}, for his useful comments on the \emph{leader election} in the dispersed-configuration case in Section~\ref{arbitrary}, which led to a more refined analysis.

\bibliographystyle{unsrt}
\bibliography{references}
\end{document}